\newcommand{\comment}[1]{}
\newcommand{\eqdef}{\stackrel{\mbox{\tiny def}}{=}}
\newcommand{\gp}{\ensuremath{p}}
\newcommand{\gq}{\ensuremath{q}}
\newcommand{\gr}{\ensuremath{r}}
\newcommand{\gs}{\ensuremath{s}}
\def\provideenvironment{\@star@or@long\provide@environment}
\def\provide@environment#1{%
        \@ifundefined{#1}%
                {\def\reserved@a{\newenvironment{#1}}}%
                {\def\reserved@a{\renewenvironment{#1}}}%
        \reserved@a
}
\def\dummy@environ{}
\hskip \labelsep {\bfseries #1}]}{\end{trivlist}}
\def\cmsy@{2}\fi 
\def\sometime{\mathord{\hbox{\large$\mathchar"0\cmsy@7D$}}}
\newcommand{\always}{\Box}
\newcommand{\nec}[1]{\raisebox{.1ex}{{\ensuremath{\mathmakebox[11pt][c]{\mathclap{\always}\mathclap{\raisebox{.35ex}{\mbox{\tiny$#1$}}}}}}}\hspace{-.16ex}}
\newcommand{\pos}[1]{{\ensuremath{\mathmakebox[11pt][c]{\mathclap{\sometime}\mathclap{\raisebox{.5ex}{\mbox{\tiny$#1$}}}}}}\hspace{-.16ex}}
\newcommand{\then}{\Rightarrow}
\newcommand{\ifonlyif}{\Leftrightarrow}
\newcommand{\tvalue}[1]{\mbox{\it #1\/}}
\newcommand{\constant}[1]{\mbox{\rm\bf #1}}
\newcommand{\true}{\constant{true}}
\newcommand{\false}{\constant{false}}
\newcommand{\start}{\constant{start}}
\newcommand{\system}[3]{\raisebox{.2ex}[1.2ex]{\raisebox{-.2ex}{{\sf #1}}{$_{#2}^{#3}$}}}
\newcommand{\set}[1]{\mbox{$\mathcal{#1}$}}
\newcommand{\Set}[1]{\ensuremath{\{#1\}}}
\newcommand{\agent}{\ensuremath{a}}
\newcommand{\Agents}{\set{A}_n}
\newcommand{\WFF}[2]{{\sf WFF}{\mbox{$_{\mbox{\small\sf #1}_{#2}}$}}}
\newcommand{\relation}[2]{\ensuremath{\mathcal #1_{#2}}}
\newcommand{\universal}{\always^*}
\newcommand{\snf}[1]{{\sf SNF}\mbox{$\textstyle _{#1}$}}
\newcommand{\newpos}[1]{\ensuremath{pos_{#1}}}
\newcommand{\G}[2]{\ensuremath{\axiomname{G}^{#1}_{#2}}}
\newcommand{\Nat}{\mbox{$\mathbb N$}}
\newcommand{\calculus}[1]{{\sf RES}\raisebox{-.5ex}{{\scriptsize{\sf #1}}}}
\newcommand{\rulename}[1]{{\sf\bf #1}}
\newcommand{\axiomname}[1]{{\sf\bf #1}}
\newcommand{\jusdef}[1]{[\rulename{#1}]}
\newcommand{\jus}[2]{[\rulename{#1},#2]}
\newcommand{\res}[2]{\mbox{\rulename{RES}$_{#1}^{#2}$}}
\newcommand{\model}[1]{\ensuremath{\set{#1}}}
\newcommand{\modelw}[2]{\ensuremath{\langle \model{#1}, #2 \rangle}}
\newcommand{\Model}[1]{\Tuple{#1}}
\newcommand{\Sts}{\ensuremath{\set{W}}}
\newcommand{\st}{\ensuremath{w}}
\newcommand{\Tuple}[1]{\ensuremath{(#1)}}
\newenvironment{keyword}{\small \begin{center}{{\bf Keywords: }}}{\end{center}}
\title{Clausal Resolution for Modal Logics of Confluence\footnote{Pre-print version of the paper accepted to IJCAR 2014.}}
\author{Cl\'audia Nalon\inst{1}\thanks{C.\ Nalon was partially supported by CAPES Foundation BEX 8712/11-5.}
\and 
Jo\~ao Marcos\inst{2}\thanks{J.\ Marcos was partially supported by CNPq and by the EU-FP7 Marie Curie project PIRSES-GA-2012-318986.} 
\and Clare Dixon\inst{3}}
\institute{Departament of Computer Science, University of Bras\'{\i}lia \\
           C.P. 4466 -- CEP:70.910-090 -- Bras\'{\i}lia -- DF -- Brazil \\
           \email{nalon@unb.br}
\and           
           LoLITA and Dept.\ of Informatics and Applied Mathematics, UFRN, Brazil \\
           \email{jmarcos@dimap.ufrn.br}
\and
          Department of Computer Science, University of Liverpool \\
          Liverpool,  L69 3BX -- United Kingdom \\
          \email{CLDixon@liverpool.ac.uk}}
\begin{document}
\maketitle

\vspace{-2ex}\begin{abstract}
We present a clausal resolution-based method for normal multimodal logics of confluence, whose Kripke semantics are based on frames characterised by appropriate instances of the Church-Rosser property. Here we restrict attention to eight families of such logics. We show how the inference rules related to the normal logics of confluence can be systematically obtained from the parametrised axioms that characterise such systems. We discuss soundness, completeness, and termination of the method. In particular, completeness can be modularly proved by showing that the conclusions of each newly added inference rule ensures that the corresponding conditions on frames hold. Some examples are given in order to illustrate the use of the method.
\end{abstract}
\begin{keyword}
normal modal logics,
combined logics,
resolution method
\end{keyword}

\section{Introduction}

Modal logics are often introduced as extensions of classical logic with two additional unary operators: ``$\nec{}$'' and ``$\pos{}$'', whose meanings 
vary with the field of application to which they are tailored to apply.
In the most common interpretation, formulae ``$\nec{} p$'' and ``$\pos{} p$'' are read as ``$p$ is necessary'' and ``$p$ is possible'', respectively. Evaluation of a modal formula depends upon an organised collection of scenarios known as {\em possible worlds}. 
Different modal logics assume different {\em accessibility relations} between such worlds. Worlds and their accessibility relations define a so-called {\em Kripke frame}. The evaluation of a formula hinges on such structure: given an appropriate accessibility relation and a world~$\st$, a formula $\nec{} p$ is satisfied at $\st $ if $p$ is true at all worlds accessible from $\st $; a formula $\pos{} p$ is satisfied at $\st $ if $p$ is true at some world accessible from $\st$. 

In normal modal logics extending the classical propositional logic, the schema $\nec{} (\varphi \then \psi) \then (\nec{} \varphi \then \nec{} \psi)$ (the distribution axiom~\axiomname{K}), where $\varphi$ and $\psi$ are well-formed formulae and~$\then$ stands for classical implication, is valid, and the schematic rule $\varphi / \nec{} \varphi$ (the necessitation rule \axiomname{Nec}) preserves validity. The weakest of these logics, named~\system{K}{(1)}{}, is semantically characterised by the class of Kripke frames with no restrictions imposed on the accessibility relation. In the multimodal version, named~\system{K}{(n)}{}, Kripke frames are directed multigraphs and modal operators are equipped with indexes over a set of \emph{agents}, given by $\Agents=\Set{1,2,\ldots,n}$, for some positive integer~$n$. Accordingly, in this case classical logic is extended with operators $\nec{1},\nec{2},\ldots,\nec{n}$, where a formula as $\nec{\agent} p$, with $\agent \in \Agents$, may be read as ``agent $\agent$ considers~$p$ to be necessary''. The modal operator $\pos{\agent}$ is the dual of $\nec{\agent}$, being introduced as an abbreviation for $\neg\nec{\agent}\neg$, where~$\neg$ stands for classical negation. The logic \system{K}{(n)}{} can be seen as the \emph{fusion} of $n$ copies of \system{K}{(1)}{} and its axiomatisation is given by the union of the axioms for classical propositional logic with the axiomatic schemata \axiomname{K$_\agent$}, namely $\nec{\agent} (\varphi \then \psi) \then (\nec{\agent} \varphi \then \nec{\agent} \psi)$, for each $\agent \in \Agents$; and the set of inference rules is given by \emph{modus ponens} and the rule schemata \axiomname{Nec$_\agent$}, namely $\varphi / \nec{\agent} \varphi$, for each $\agent \in \Agents$.

The basic normal multimodal logic \system{K}{(n)}{} and its extensions have been widely used to represent and reason about complex systems. Some of the interesting extensions include the normal multimodal logics based on \axiomname{K$_\agent$} and (the combination of) axioms as, for instance, \axiomname{T$_\agent$} $(\nec{\agent}\varphi \then \varphi)$, \axiomname{D$_\agent$} $(\nec{\agent}\varphi \then\pos{\agent}\varphi)$, \axiomname{4$_\agent$} $(\nec{\agent}\varphi \then \nec{\agent}\nec{\agent}\varphi)$, \axiomname{5$_\agent$} $(\pos{\agent}\varphi \then\nec{\agent}\pos{\agent}\varphi)$, and \axiomname{B$_\agent$} $(\pos{\agent}\nec{\agent}\varphi \then\varphi)$. For example, the description logic {$\mathcal{ALC}$}, which is employed for reasoning about ontologies, is a syntactic variant of \system{K}{(1)}{} \cite{Schild91}; the epistemic logic, denoted by \system{S5}{(n)}{}, which is used in dealing with problems ranging from multi-agency to communication protocols \cite{rao:91c,FHMV95}, can be axiomatised by combining \axiomname{K$_\agent$}, \axiomname{T$_\agent$}, and \axiomname{5$_\agent$}. The addition of those axioms (or their combinations) to \system{K}{(n)}{} imposes some restrictions on the class of models where formulae are valid. Thus, a formula valid in a logic containing \axiomname{T$_\agent$} is valid only if it is valid in a frame where the accessibility relation for each agent $\agent$ is {\em reflexive}. The other axioms, \axiomname{D$_\agent$}, \axiomname{4$_\agent$}, \axiomname{5$_\agent$}, and \axiomname{B$_\agent$}, demand the accessibility relation for each agent $\agent$ to be, respectively, {\em serial}, {\em transitive}, {\em Euclidean}, and {\em symmetric}.

A \emph{logic of confluence} \system{K}{(n)}{\gp,\gq,\gr,\gs} is a modal system axiomatised by  \system{K}{(n)}{} plus axioms $\G{\gp,\gq,\gr,\gs}{\agent}$ of the form 

\[\pos{\agent}^\gp\nec{\agent}^\gq \varphi \then \nec{\agent}^\gr\pos{\agent}^\gs\varphi\]

\noindent where $a\in\Agents$, $\varphi$ is a well-formed formula, $\gp,\gq, \gr,\gs \in \Nat$, where $\nec{\agent}^0\varphi \eqdef \varphi$ and $\nec{\agent}^{i+1}\varphi \eqdef \nec{\agent}\nec{\agent}^{i}\varphi$, and where $\pos{\agent}^0\varphi \eqdef \varphi$ and $\pos{\agent}^{i+1}\varphi \eqdef \pos{\agent}\pos{\agent}^{i}\varphi$, for $i \in \Nat$ (the superscript is often omitted if equal to~$1$). Such axiomatic schemata were notably studied by Lemmon \cite{SL:1977}. Using Modal Correspondence Theory, it can be shown that the frame condition on a logic where an instance of $\G{\gp,\gq,\gr,\gs}{a}$ is valid corresponds to a generalised diamond-like structure representing the \emph{Church-Rosser property} (the philosophical literature sometimes calls such property `incestual' \cite{Chellas80}), as illustrated in Fig.~\ref{fig:diamond} \cite{CarnielliP08}.
To be more precise, let $\Sts$ be a nonempty set of worlds and let $\relation{R}{\agent} \subseteq \Sts \times \Sts$ be the accessibility relation of agent $\agent \in \Agents$.  By $\st\relation{R}{\agent}^0 \st'$ we mean that $\st=\st'$, and $\st\relation{R}{\agent}^{i+1} \st'$ means that there is some world~$\st''$ such that $\st\relation{R}{\agent} \st''$ and $\st''\relation{R}{\agent}^{i} \st'$. Thus, $\st\relation{R}{\agent}^{i} \st'$ holds if there is an $i$-long $\relation{R}{\agent}$-path from~$\st$ to~$\st'$; alternatively, to assert that, we may also write $(\st,\st')\in\relation{R}{\agent}^{i}$.
Given these definitions, the condition on frames that corresponds to the axiom $\G{\gp,\gq,\gr,\gs}{\agent}$ is described by $\forall \st_0,\st_1,\st_2 \; (\st_0\relation{R}{\agent}^\gp \st_1 \land \st_0\relation{R}{\agent}^\gr \st_2 \then \exists \st_3 (\st_1\relation{R}{\agent}^\gq \st_3 \land \st_2 \relation{R}{\agent}^\gs \st_3))$, where $\st_0$, $\st_1$, $\st_2$, $\st_3$ $\in \Sts$.

\begin{figure}[h!]\center
\begin{center}
\includegraphics[width=.5\textwidth]{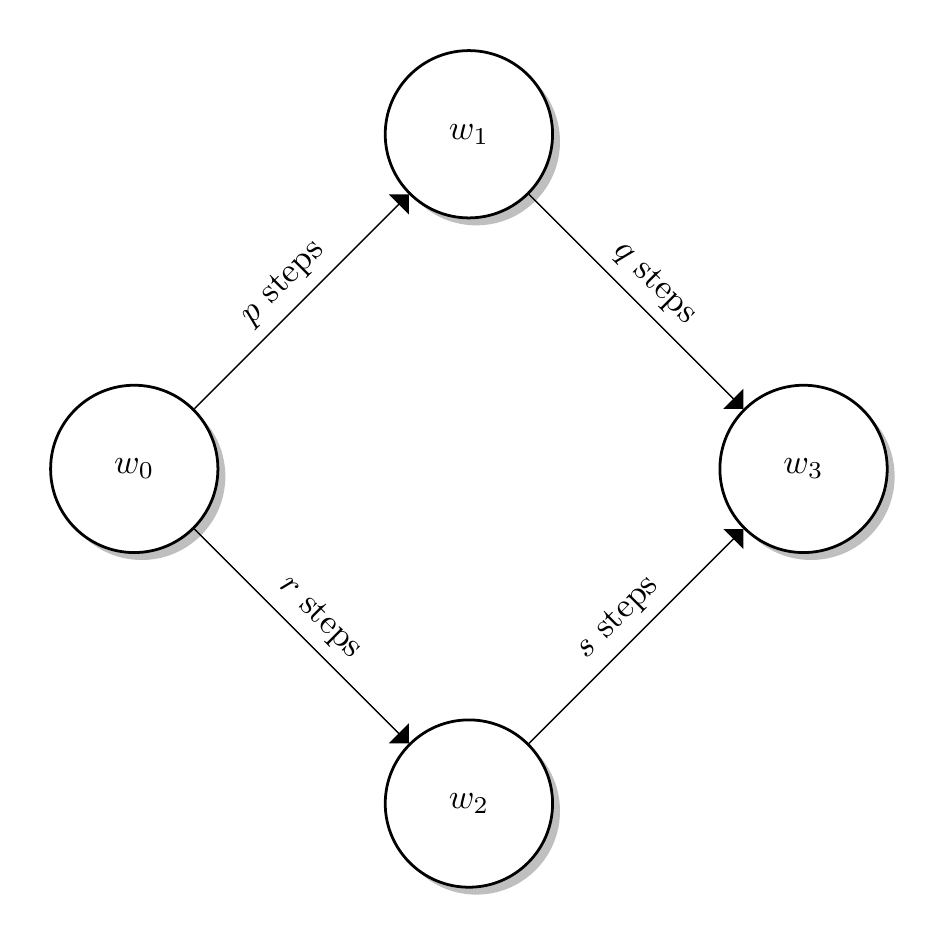}
\end{center}
\caption{Church-Rosser property for frames where $\G{\gp,\gq,\gr,\gs}{a} = \protect\pos{a}^\gp\protect\nec{a}^\gq\varphi \then \protect\nec{a}^\gr\protect\pos{a}^\gs\varphi$ is valid.}
\label{fig:diamond}
\end{figure}

Many well-known modal axiomatic systems are identified with particular logics of confluence. For instance, \system{T}{(n)}{} corresponds to \system{K}{(n)}{0,1,0,0}, a normal modal logic in which the axiom $\nec{\agent}\varphi \then \varphi$ is valid, for all $\agent \in \Agents$ and any formula~$\varphi$. The axiom~\axiomname{4$_\agent$} may be written as $\G{0,1,2,0}{\agent}$, that is, $\nec{\agent}^1 \varphi \then \nec{\agent}^2\varphi$. The Geach axiom \axiomname{G1$_\agent$} is given by $\G{1,1,1,1}{\agent}$ ($\pos{\agent}\nec{\agent}\varphi \then \nec{\agent}\pos{\agent}\varphi$).  
Formulae in \system{K}{(n)}{1,1,1,1} are satisfiable if, and only if, they are satisfiable in a model with $n$ relations satisfying the so-called `diamond property', and analogous claims hold for instance concerning formulae of \system{T}{(n)}{} and models whose relations are all reflexive, and formulae of \system{4}{(n)}{} and models whose relations are all transitive.

Logics of confluence are interesting not only because they encompass a great number of normal modal logics as particular examples, but also in view of their attractive computational behaviour.  Indeed, if we think of multimodal frames as abstract rewriting systems, for instance, and think of modal languages as a way of obtaining an internal and local perspective on such frames, then each given notion of confluence ensures that certain different paths of transformation will eventually lead to the same result. Having a decidable proof procedure for a logic underlying such class of frames helps in establishing a direct form of verifying the properties of the structures that they represent.

As a contribution towards a uniform approach to the development of proof methods for logics of confluence, in this work we deal with the logics where $\gp,\gq,\gr,\gs \in \Set{0,1}$. Table~\ref{table:frames} shows the relevant axiomatic schemata, some standard names by which they are known, and the corresponding conditions on frames. The axiom \axiomname{\G{0,1,1,1}{\agent}} seems not to be named in the literature; the corresponding property follows the naming convention given in \cite[pg.\ 127]{Boolos:lp:1993}.  Note that $\G{0,0,0,0}{\agent}$, $\G{0,1,1,0}{\agent}$, and $\G{1,0,0,1}{\agent}$ are instances of classical tautologies and are thus not included in Table~\ref{table:frames}. Also, given the duality between $\nec{\agent}$ and $\pos{\agent}$, $\G{\gp,\gq,\gr,\gs}{\agent}$ is semantically equivalent to $\G{\gr,\gs,\gp,\gq}{\agent}$. Thus, there are in fact eight families of multimodal logics related to the axioms $\G{\gp,\gq,\gr,\gs}{\agent}$, where $\gp,\gq,\gr,\gs \in \Set{0,1}$. 

{\tiny
\begin{table}[h]
\hspace{-5mm}
\begin{tabular}{|c|c|c|c|p{6.2cm}|}\hline
(p,q,r,s)   &       Name      & Axioms & Property & Condition on Frames \\ \hline
$(0,0,1,1)$ & \axiomname{B$_\agent$}   & $\varphi \then \nec{\agent}\pos{\agent}\varphi$ & symmetric & $\forall w, w' (w \relation{R}{\agent} w' \then w' \relation{R}{\agent} w)$ \\
$(1,1,0,0)$ &                 & $\pos{\agent}\nec{\agent}\varphi \then \varphi$ & & \\ \hline
$(0,0,1,0)$ & \axiomname{Ban$_\agent$} & $\varphi \then \nec{\agent}\varphi$ &  modally banal & $\forall w, w' (w \relation{R}{\agent} w' \then w = w')$  \\
$(1,0,0,0)$ &                 & $\pos{\agent}\varphi \then \varphi$ &  &\\ \hline
$(0,1,0,1)$ & \axiomname{D$_\agent$}   & $\nec{\agent}\varphi \then \pos{\agent}\varphi$ & serial & $\forall w \exists w' (w \relation{R}{\agent} w')$ \\ \hline
$(1,0,1,0)$ & \axiomname{F$_\agent$}   & $\pos{\agent}\varphi \then \nec{\agent}\varphi$ & functional & $\forall w, w', w'' ((w \relation{R}{\agent} w' \land w \relation{R}{\agent} w'') \then w' = w'')$  \\ \hline
$(0,0,0,1)$ & \axiomname{T$_\agent$}   & $\varphi \then \pos{\agent}\varphi$ & reflexive & $\forall w  (w \relation{R}{\agent} w)$ \\
$(0,1,0,0)$ &                 & $\nec{\agent}\varphi \then \varphi$ &           & \\ \hline
$(1,0,1,1)$ & \axiomname{5$_\agent$}   & $\pos{\agent}\varphi \then \nec{\agent}\pos{\agent}\varphi$ & Euclidean & $\forall w, w', w'' ((w \relation{R}{\agent} w' \land w \relation{R}{\agent} w'') \then w' \relation{R}{\agent} w'')$ \\
$(1,1,1,0)$ &                 & $\pos{\agent}\nec{\agent}\varphi \then \nec{\agent}\varphi$ & & \\ \hline
$(1,1,1,1)$ & \axiomname{G1$_\agent$}  & $\pos{\agent}\nec{\agent}\varphi \then \nec{\agent}\pos{\agent}\varphi$ & convergent & $\forall w, w', w'' ((w \relation{R}{\agent} w' \land w \relation{R}{\agent} w'') \then \exists w''' (w' \relation{R}{\agent} w''' \land w'' \relation{R}{\agent} w'''))$ 
\\ \hline
$(0,1,1,1)$ & \axiomname{\G{0,1,1,1}{\agent}} & $\nec{\agent}\varphi \then \nec{\agent}\pos{\agent}\varphi$ & {0,1,1,1-convergent} & 
$\forall w, w' (w \relation{R}{\agent} w' \then \exists w''(w \relation{R}{\agent} w'' \land w' \relation{R}{\agent} w''))$
\\ 
$(1,1,0,1)$ &  & $\pos{\agent}\nec{\agent}\varphi \then \pos{\agent}\varphi$ & {} & \\ \hline
\end{tabular}
\label{table:frames}
\caption{Axioms and corresponding conditions on frames.}
\end{table}
}

We present a clausal resolution-based method for solving the satisfiability problem in logics axiomatised by \axiomname{K$_\agent$} plus $\G{\gp,\gq,\gr,\gs}{\agent}$, where $\gp,\gq,\gr,\gs \in \Set{0,1}$. The resolution calculus is based on that of \cite{ND07:jaal}, which deals with the logical fragment corresponding to $\system{K}{(n)}{}$. The new inference rules to deal with axioms of the form $\G{\gp,\gq,\gr,\gs}{a}$ add relevant information to the set of clauses: the conclusion of each inference rule ensures that properties related to the corresponding conditions on frames hold, that is, the newly added clauses capture the required properties of a model. We discuss soundness, completeness, and termination. Full proofs can be found in \cite{MND13}.

\section{The Normal Modal Logic {\protect\system{K}{(n)}{}}}\label{section-modal-k}


The set \WFF{K}{(n)} of \emph{well-formed formulae} of the logic \system{K}{(n)}{} is constructed from a denumerable set of \emph{propositional symbols}, $\set{P} = \{p, q, p', q', p_1,q_1,\ldots\}$, the negation symbol~$\neg$, the conjunction symbol~$\wedge$, the propositional constant \constant{true}, and a unary connective~$\nec{\agent}$ for each agent~$\agent$ in the finite set of agents $\Agents = \{1, \ldots, n\}$.
When $n=1$, we often omit the index, that is, $\nec{}\varphi$ stands for $\nec{1}\varphi$. As usual, $\pos{\agent}$ is introduced as an abbreviation for $\neg\nec{\agent}\neg$. 
A \emph{literal} is either a propositional symbol or its negation; the set of literals is denoted by~$\set{L}$. By $\neg l$ we will denote the \emph{complement} of the literal~$l\in\set{L}$, that is, $\neg l$ denotes $\neg p$ if $l$ is the propositional symbol~$p$, and $\neg l$ denotes~$p$ if $l$ is the literal $\neg p$. A \emph{modal literal} is either $\nec{\agent}l$ or $\neg\!\,\nec{\agent}l$, where $l \in \set{L}$ and $\agent \in \Agents$. 

We present the semantics of \system{K}{(n)}{}, as usual, in terms of Kripke frames.

\begin{definition} 
  A \emph{Kripke frame $\model{S}$ for $n$ agents over \set{P}} 
  is given by a tuple $\Model{ \Sts, \st_0, \relation{R}{1}, \relation{R}{2}, \ldots, \relation{R}{n}}$, 
  where $\Sts$ is a set of possible {\em worlds} (or {\em states}) with a distinguished world $\st_0\,$\/, 
  and each $\relation{R}{\agent}$ is a binary relation on $\Sts$. 
  A \emph{Kripke model} $\model{M} = \Model{ \model{S},\pi }$ equips a Kripke frame~$\model{S}$
  with a function $\pi:\Sts\rightarrow (\set{P} \rightarrow \{\tvalue{true}, \tvalue{false}\})$ 
  that plays the role of an interpretation that associates to each state $\st\in\Sts$ 
  a truth-assignment to propositional symbols.
\end{definition}

\noindent The so-called accessibility relation~$\relation{R}{\agent}$ is a binary relation that captures the notion of relative possibility from the viewpoint of agent~$\agent$: A pair $(\st,\st')$ is in $\relation{R}{\agent}$ if agent $\agent$ considers world $\st'$ possible, given the information available to her in world~$\st$. We write $\modelw{M}{\st}\models \varphi$ (resp.\ $\modelw{M}{\st}\not\models \varphi$) to say that $\varphi$ is satisfied (resp.\ not satisfied) at the world~$\st$ in the Kripke model $\model{M}$.

\begin{definition} Satisfaction of a formula at a given world~$\st$ of a model~$\model{M}$ is set by:
\begin{itemize}
\item $\modelw{M}{\st} \models \constant{true}$
\item $\modelw{M}{\st} \models p$ if, and only if, $\pi (\st)(p)= \tvalue{true}$, where $p\in\set{P}$
\item$\modelw{M}{\st} \models \neg \varphi$ if, and only if, $\modelw{M}{\st} \not\models \varphi$
\item $\modelw{M}{\st} \models (\varphi \wedge \psi)$ if, and only if, $\modelw{M}{\st} \models \varphi$ and  $\modelw{M}{\st} \models \psi$
\item $\modelw{M}{\st} \models \nec{\agent} \varphi$ if, and only if $\modelw{M}{\st'} \models \varphi$, for all $\st'$ such that $\st\relation{R}{\agent}\st'$
\end{itemize}
\end{definition}

\noindent The formulae $\constant{false}$, $(\varphi \vee \psi)$, $(\varphi \then \psi)$, and $\pos{a}\varphi$ are introduced as the usual abbreviations for $\neg\constant{true}$, $\neg(\neg \varphi \wedge \neg \psi)$, $(\neg \varphi \vee \psi)$, and $\neg \nec{a}\neg\varphi$, respectively. Formulae are interpreted with respect to the distinguished world $\st_0$, that is, satisfiability is defined with respect to pointed-models. 
A formula $\varphi$ is said to be \emph{satisfied in the model $\model{M}=\Model{\model{S},\pi}$ of the Kripke frame $\model{S}=\Model{\Sts,\st_0,\relation{R}{1}, \ldots, \relation{R}{n}}$\/} if $\modelw{M}{\st_0} \models \varphi$; the formula~$\varphi$ is \emph{satisfiable in a Kripke frame $\model{S}$} if there is a model $\model{M}$ of $\model{S}$ such that $\modelw{M}{\st_0} \models \varphi$; and $\varphi$ is said to be \emph{valid in a class $\model{C}$ of Kripke frames} if it is satisfied in any model of any Kripke frame belonging to the class~$\model{C}$.

\section{Resolution for \protect\system{K}{(n)}{}}\label{sec:resolution for K}

In \cite{ND07:jaal}, a sound, complete, and terminating resolution-based method for \system{K}{(n)}{}, which in this paper we call \calculus{K}, is introduced. As the proof-method for logics of confluence presented here relies on \calculus{K}, in order to keep the present paper self-contained, we reproduce the corresponding inference rules here and refer the reader to \cite{ND07:jaal} for a detailed account of the method. The approach taken in the resolution-based method for \system{K}{(n)}{} is clausal: a formula to be tested for (un)satisfiability is first translated into a normal form, explained in Section~\ref{section-modal-snf}, and then the inference rules given in Section~\ref{section-k-rules} are applied until either a contradiction is found or no new clauses can be generated. 

\subsection{A Normal Form for {\protect\system{K}{(n)}{}}}\label{section-modal-snf}

\noindent Formulae in the language of \system{K}{(n)}{} can be transformed into a normal form called Separated Normal Form for Normal Logics (\snf{}). As the semantics is given with respect to a pointed-model, we add a nullary connective $\constant{start}$ in order to represent the world from which we start reasoning. Formally, given a model $\model{M}=\Model{\Sts,\st_0, \relation{R}{1}, \ldots, \relation{R}{n},\pi}$, we have that $\modelw{M}{\st}\models \constant{start}\mbox{ if, and only if, }\st=\st_0$. A formula in \snf{} is represented by a conjunction of clauses, which are true at all reachable states, that is, they have the general form $\bigwedge_i \universal  A_i$, where $A_i$ is a clause and $\universal$, the universal operator, is characterised by (the greatest fixed point of)  $\universal \varphi \ifonlyif \varphi \wedge \bigwedge_{\agent \in \Agents} \nec{\agent} \universal \varphi$, for a formula $\varphi$. Observe that satisfaction of $\universal\varphi$ imposes that~$\varphi$ must hold at the actual world $\st$ and at every world reachable from $\st$, where reachability is defined in the usual (graph-theoretic) way. 
Clauses have one of the following forms: 

{\small	
\begin{itemize}
\item \parbox{2in}{Initial clause}\parbox{2in}{\[\constant{start} \then \bigvee_{b=1}^{r} l_b\]}

\item \parbox{2in}{Literal clause}\parbox{2in}{\[\constant{true} \then \bigvee_{b=1}^{r} l_b\]}

\item \parbox{2in}{Positive $\agent$-clause}\parbox{2in}{\[l' \then \nec{\agent}l\]}

\item \parbox{2in}{Negative $\agent$-clause}\parbox{2in}{\[l' \then \neg \nec{\agent}l\]}
\end{itemize}
}

\noindent where $l$, $l'$, $l_b \in \set{L}$. Positive and negative $\agent$-clauses are together known as {\em modal~$\agent$-clauses}; the index~$\agent$ may be omitted if it is clear from the context. 

The translation to \snf{} uses rewriting of classical operators and the renaming technique \cite{PG86}, where complex subformulae are replaced by new propositional symbols and the truth of these new symbols is linked to the formulae that they replaced in all states. Given a formula $\varphi$, the translation procedure is applied to $\universal(\start \then t_0) \land \universal(t_0 \then \varphi)$, where $t_0$ is a new propositional symbol. 
The universal operator, which surrounds all clauses, ensures that the clauses generated by the translation of a formula are true at all reachable worlds. 
Classical rewriting is used to remove some classical operators from~$\varphi$ (e.g.\  $\universal(t \then \psi_1 \land \psi_2)$ is rewritten as $\universal(t \then \psi_1) \land \universal(t \then \psi_2)$). Renaming is used to replace complex subformulae in disjunctions (e.g.\  if $\psi_2$ is not a literal, $\universal(t \then \psi_1 \lor \psi_2)$ is rewritten as $\universal(t \then \psi_1 \lor t_1) \land \universal(t_1 \then \psi_2)$, where~$t_1$ is a new propositional symbol) or in the scope of modal operators (e.g.\  if $\psi$ is not a literal, $\universal(t \then \nec{\agent}\psi)$ is rewritten as $\universal(t \then \nec{\agent}t_1) \land \universal(t_1 \then \psi)$, where $t_1$ is a new propositional symbol). We refer the reader to~\cite{ND07:jaal} for details on the transformation rules that define the translation to \snf{}, their correctness, and examples of their application.

\subsection{Inference Rules for {\protect\system{K}{(n)}{}}}\label{section-k-rules}

In the following, $l$, $l'$, $l_i$, $l'_i \in \set{L}$ ($i \in \Nat$) and $D$, $D'$ are disjunctions of literals. 

\vspace{.1ex}{\noindent\bf Literal Resolution.} This is classical resolution applied to the classical propositional fragment of the combined logic. An initial clause may be resolved with either a literal clause or another initial clause (rules \rulename{IRES1} and \rulename{IRES2}). Literal clauses may be resolved together (\rulename{LRES}).

\noindent{\small
\begin{center}
\setlength{\arraycolsep}{1pt}
\begin{tabular}{ccc}
$
\begin{array}{lrcl}
\mbox{[\rulename{IRES1}]} &\universal(\constant{true} & \then & D \vee l)\\
 &\universal(\constant{start} & \then  & D' \vee \neg l)\\ \cline{2-4}
 &\universal(\constant{start} & \then & D \vee D')
\end{array}
$
&
$
\begin{array}{lrcl} 
\mbox{[\rulename{IRES2}]} &\universal(\constant{start} & \then &  D \vee l)\\
 &\universal(\constant{start} & \then & D' \vee \neg l)\\  \cline{2-4}
 &\universal(\constant{start} & \then & D \vee D')
\end{array}
$
&
$
\begin{array}{lrcl}
\mbox{[\rulename{LRES}]} &\universal(\constant{true} & \then &  D \vee l)\\
 &\universal(\constant{true} & \then & D' \vee \neg l)\\  \cline{2-4}
 &\universal(\constant{true} & \then & D \vee D')
\end{array}
$
\end{tabular}
\end{center}}
 
\vspace{1ex}{\noindent\bf Modal Resolution.} These rules are applied
between clauses which refer to the same context, that is, they must refer
to the same agent. For instance, we may resolve two or more
$\nec{\agent}$-clauses (rules \rulename{MRES} and \rulename{NEC2}); or 
several $\nec{\agent}$-clauses and a literal clause (rules \rulename{NEC1} and \rulename{NEC3}). 
The modal inference rules are: 

\label{modal-resolution}

{\small
\begin{center}
\setlength{\arraycolsep}{1pt}
\begin{tabular}{@{\hspace{-5pt}}l@{\hspace{-15pt}}c}

\begin{tabular}{l}
$
\begin{array}{lrcl}
\mbox{[\rulename{MRES}]} &\universal( l_1  & \then & \nec{\agent} l) \\
              &\universal( l_2 & \then & \neg \nec{\agent} l)\\ \cline{2-4}
              &\universal( \constant{true} & \then & \neg l_1 \vee \neg l_2)
\\
\\
\end{array}
$
\\[2mm]
$
\begin{array}{lrcl}
\mbox{[\rulename{NEC1}]} & \universal(l'_1 & \then & \nec{\agent}\neg l_1 )\\[-1mm]
             & & \vdots & \\[-1mm]
             & \universal(l'_m & \then & \nec{\agent}\neg l_m )\\
             & \universal(l' & \then & \neg \nec{\agent} l )\\
             & \universal(\constant{true} & \then & l_1 \vee \ldots \vee l_m \vee l) \\  \cline{2-4}
             & \universal(\constant{true} & \then & \neg l'_1 \vee \ldots \vee \neg l'_m \vee \neg l')
\end{array} 
$
\end{tabular}
&
\begin{tabular}{l}
$
\begin{array}{lrcl}
\mbox{[\rulename{NEC2}]} & \universal(l'_1 & \then & \nec{\agent} l_1) \\
              & \universal(l'_2 & \then & \nec{\agent} \neg l_1) \\
              & \universal(l'_3 & \then & \neg \nec{\agent} l_2) \\ \cline{2-4}
              & \universal(\constant{true} & \then & \neg  l'_1 \vee  \neg l'_2 \vee \neg l'_3)
\end{array} 
$
\\[2mm]
$
\begin{array}{lrcl}
 \mbox{[\rulename{NEC3}]} & \universal(l'_1 & \then & \nec{\agent}\neg l_1 )\\[-1mm]
             & & \vdots & \\[-1mm]
             & \universal(l'_m & \then & \nec{\agent}\neg l_m )\\
             & \universal(l' & \then & \neg \nec{\agent} l )\\
             & \universal(\constant{true} & \then & l_1 \vee \ldots \vee
 l_m) \\  \cline{2-4}
             & \universal(\constant{true} & \then & \neg l'_1 \vee \ldots
 \vee \neg l'_m \vee \neg l')
 \end{array}
 $
\end{tabular}
\end{tabular}
\end{center}}

\noindent 
The rule \rulename{MRES} is a syntactic variation of classical resolution, as a formula and its negation cannot be true at the same state. The rule \rulename{NEC1} corresponds to necessitation (applied to $(\neg l_1 \land \ldots \land \neg l_m \then \neg l)$, which is equivalent to the literal clause in the premises) and several applications of classical resolution. The rule \rulename{NEC2} is a special case of \rulename{NEC1}, as the parent clauses can be resolved with the tautology $\constant{true} \then l_1 \vee \neg l_1 \vee l_2$. The rule \rulename{NEC3} is similar to \rulename{NEC1}, however the negative modal clause is not resolved with the literal clause in the premises. Instead, the negative modal clause requires that resolution takes place between literals on the right-hand side of positive modal clauses and the literal clause. The resolvents in the inference rules \rulename{NEC1}--\rulename{NEC3} impose that the literals on the left-hand side of the modal clauses in the premises are not all satisfied whenever their conjunction leads to a contradiction in a successor state. Given the syntactic forms of clauses, the three rules are needed for completeness \cite{ND07:jaal}. Note that for \rulename{NEC1}, we may have $m=0$; for \rulename{NEC2} the number of premises is fixed; and that for \rulename{NEC3}, if $m=0$, then the literal clause in the premises is $\true \then \false$, which cannot be satisfied in any model. Thus, \rulename{NEC3} is not applied when $m=0$.

We define a derivation as a sequence of sets of clauses $\set{T}_0$, \set{T}$_1$, \ldots, where \set{T}$_i$ results from adding to \set{T}$_{i-1}$ the resolvent obtained by an application of an inference rule of \calculus{K} to clauses in \set{T}$_{i-1}$. A derivation \emph{terminates} if, and only if, either a contradiction, in the form of $\universal(\start \then \false)$ or $\universal(\true \then \false)$, is derived or no new clauses can be derived by further application of the resolution rules of $\calculus{K}$. We assume standard simplification from classical logic to keep the clauses as simple as possible. For example, $D \lor l \lor l$ on the right-hand side of a clause would be rewritten as $D \lor l$. 
 
\begin{example}\label{example:resolution:kn} We wish to check whether the formula $\nec{1}\nec{2}(a \land b) \then \nec{1}(\nec{2}a \land \nec{2}b)$ is valid in \system{K}{(2)}{}. The translation of its negation into the normal form is given by clauses (1)--(9) below. Then the inference rules are applied until $\constant{false}$ is generated. In order to improve readability, the universal operator is suppressed. The full refutation follows.

{\small
\begin{center}
$
\begin{array}{lrlll}
1. &  \start & \then & t_1 \\                                             
2. &  t_1 & \then &\nec{1}t_2  \\                                          
3. &  t_2 & \then &\nec{2}t_3  \\                                          
4. &  \true & \then & \neg t_3\lor a \\                                         
5. &  \true & \then & \neg t_3\lor b \\                                         
6. &  t_1 & \then & \neg \nec{1} \neg t_4 \\                                          
7. &  \true & \then & \neg t_4\lor t_5\lor t_6 \\
8. &  t_5 & \then & \neg \nec{2}a        \\                                    
\end{array}
$\quad\quad
$
\begin{array}{lrlll}                                       
9. &  t_6 & \then & \neg \nec{2}b        \\                          
10. &  \true & \then & \neg t_2\lor \neg t_5  &  [\rulename{NEC1}, 3,8,4]\\
11. &  \true & \then & \neg t_2\lor \neg t_4 \lor t_6 & [\rulename{LRES}, 10,7]\\
12. &  \true & \then & \neg t_2\lor \neg t_6 & [\rulename{NEC1}, 3,9,5]\\
13. &  \true & \then & \neg t_2\lor \neg t_4 & [\rulename{LRES}, 12,11]\\
14. &  \true & \then & \neg t_1 & [\rulename{NEC1}, 2,6,13]\\
15. &  \start & \then & \false & [\rulename{IRES1},14,1]\\
\\
\end{array}
$
\end{center}}

\noindent Clauses (10) and (12) are obtained by applications of \rulename{NEC1} to clauses in the context of agent 2. Clause (14) is obtained by an application of the same rule, but in the context of agent 1. Clauses (11) and (13) result from applications of resolution to the propositional part of the language shared by both agents. Clause (15) shows that a contradiction was found at the initial state. Therefore, the original formula is valid.

\end{example}

\section{Clausal Resolution for Logics of Confluence}\label{sec:calculus:convergence}

The inference rules of \calculus{K}, given in Section~\ref{section-k-rules}, are resolution-based: whenever a set of (sub)formulae is identified as contradictory, the resolvents require that they are not all satisfied together. The extra inference rules for $\system{K}{(n)}{\gp,\gq,\gr,\gs}$, with $\gp,\gq,\gr,\gs \in \Set{0,1}$, which we are about to present, have a different flavour: whenever we can identify that the set of clauses imply that $\pos{a}^\gp\nec{a}^\gq \psi$ holds, we add some new clauses that ensure that $\nec{a}^\gr\pos{a}^\gs \psi$ also holds. If this is not the case, that is, if the set of clauses implies that $\neg\nec{a}^\gr\pos{a}^\gs \psi$ holds, then a contradiction is found by applying the inference rules for  $\system{K}{(n)}{}$. 
Because of the particular normal form we use here, there are, in fact, two general forms for the inference rules for $\system{K}{(n)}{\gp,\gq,\gr,\gs}$, given in Table~\ref{table:rules:general} (where $l,l'$ are literals and $C$ is a conjunction of literals).\vspace{-2ex}

{\small
\begin{table}[h]
\centering
\begin{tabular}{|c|c|}
\hline
$
\begin{array}{lrcl}
{[\res{\agent}{\gp,1,\gr,\gs}]} & \universal(l & \then & \nec{\agent} l')\\ \cline{2-4}
                      & \universal(\pos{\agent}^{\,\gp} l & \then & \nec{\agent}^\gr\pos{\agent}^{\gs} l')
\end{array}
$
\quad\quad &
$
\begin{array}{lrcl}
{[\res{\agent}{\gp,0,\gr,\gs}]} & \universal(C & \then & \pos{\agent}^{\,\gp} l') \\ \cline{2-4}
                     & \universal(C & \then & \nec{\agent}^\gr\pos{\agent}^{\gs} l')
\end{array}
$
\\ \hline
\multicolumn{2}{c}{}\\
\end{tabular}
\caption{Inference Rules for \protect{$\G{\gp,\gq,\gr,\gs}{\agent}$}}
\label{table:rules:general}
\end{table}}

\vspace{-4ex}Soundness is checked by showing that the transformation of a formula $\varphi \in  \WFF{K}{(n)}$ into its normal form is satisfiability-preserving and that the application of the inference rules are also satisfiability-preserving. Satisfiability-preserving results for the transformation into \snf{} are provided in \cite{ND07:jaal}. To extend the soundness results so as to cover the new inference rules, note that the conclusions of the inference rules in Table~\ref{table:rules:general} are derived using the semantics of the universal operator and the distribution axiom, $\axiomname{K}_{\agent}$. For \res{\agent}{\gp,1,\gr,\gs}, we have that the premise $\universal(l  \then  \nec{\agent} l')$ is semantically equivalent to $\universal(\neg \nec{\agent} l' \then  \neg l)$. By the definition of the universal operator, we obtain $\universal(\nec{\agent}^{\gp} (\neg \nec{\agent} l'  \then  \neg l))$. Applying the distribution axiom \axiomname{K}$_{\agent}$ to this clause results in $\universal(\nec{\agent}^{\gp} \neg \nec{\agent} l'  \then  \nec{\agent}^{\gp}\neg l))$, which is semantically equivalent to $\universal(\neg \nec{\agent}^{\gp}\neg l  \then  \neg\nec{\agent}^{\gp} \neg \nec{\agent} l')$. As $\pos{\agent}$ is an abbreviation for $\neg\nec{\agent}\neg$ and because $\pos{\agent}^{\gp} \nec{\agent} l'$ implies $\nec{\agent}^{\gr} \pos{\agent}^{\gs} l'$ in $\system{K}{(n)}{\gp,1,\gr,\gs}$, by classical reasoning, we have that $\universal(\neg \nec{\agent}^{\gp}\neg l  \then  \neg\nec{\agent}^{\gp} \neg \nec{\agent} l')$ implies $\universal(\pos{\agent}^{\gp}l  \then  \nec{\agent}^{\gr} \pos{\agent}^{\gs} l')$, the conclusion of \res{\agent}{\gp,1,\gr,\gs}. Soundness of the inference rule \res{\agent}{\gp,0,\gr,\gs} can be proved in a similar way. 

As the conclusions of the above inference rules may contain complex formulae, they might need to be rewritten into the normal form. Thus, we also need to add clauses corresponding to the normal form of $\pos{a}^\gp l$ and $\pos{a}^\gs l'$, which occur in the conclusions of the inference rules. Let $\varphi$ be a formula and let $\tau(\varphi)$ be the set of clauses resulting from the translation of $\varphi$ into the normal form. Let $\set{L}(\tau(\varphi))$ be the set of literals that might occur in the clause set, that is, for all $p \in \set{P}$ such that $p$ occurs in $\tau(\varphi)$, we have that both $p$ and $\neg p$ are in $\set{L}(\tau(\varphi))$. The set of \emph{definition clauses} is given by\vspace{-2ex}

\[
\begin{array}{rcl}
\universal(\newpos{a,l} & \then & \neg \nec{a}\neg l)\\
\universal(\neg \newpos{a,l} & \then & \nec{a}\neg l)\\
\end{array}
\]

\noindent for all $l \in \set{L}(\tau(\varphi))$, where $\newpos{a,l}$ is a new propositional \emph{definition symbol} used for renaming the negative modal literal $\pos{a} l$, that is, the definition clauses correspond to the normal form of $\newpos{a,l}\ifonlyif\neg\nec{a}\neg l $. Note that we have definition clauses for every propositional symbol and its negation, e.g.\ for a propositional symbol $p \in \tau(\varphi)$, we have the definition clauses $\universal(\newpos{a,p} \then \neg \nec{a}\neg p)$, $\universal(\neg \newpos{a,p} \then \nec{a}\neg p)$, $\universal(\newpos{a,\neg p} \then \neg \nec{a} p)$, and  $\universal(\neg \newpos{a,\neg p} \then \nec{a}p)$, for every $\agent \in \Agents$ occurring in $\tau(\varphi)$. We assume the set of definition clauses to be available whenever those symbols are used. It is also important to note that those new definition symbols and the respective definition clauses can all be introduced at the beginning of the application of the resolution method because we do not need definition clauses applied to definition symbols in the proofs, as given in the completeness proof \cite{MND13}. As no new propositional symbols are introduced by the inference rules, there is a finite number of clauses that might be expressed (modulo simplification) and, therefore, the clausal resolution method for each modal logic of confluence is terminating.

As discussed above and from the results in \cite{ND07:jaal}, we can establish the soundness of the proof method.

\begin{theorem}\label{theo:soundness:confluence} The resolution-based calculi for logics of confluence are sound.
\end{theorem}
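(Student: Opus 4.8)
The plan is to establish soundness in the usual refutational sense: show that every single step of a derivation preserves satisfiability over the class $\model{C}$ of Kripke frames characterising the logic $\system{K}{(n)}{\gp,\gq,\gr,\gs}$ under consideration, so that whenever the contradiction $\universal(\start \then \false)$ or $\universal(\true \then \false)$ is derived, the starting clause set — and hence the input formula — must already have been unsatisfiable over $\model{C}$. Concretely, I would factor the argument into three preservation statements: (i) the translation of $\varphi \in \WFF{K}{(n)}$ into \snf{} is satisfiability-preserving; (ii) each rule of \calculus{K} preserves satisfiability; and (iii) each new rule of Table~\ref{table:rules:general}, together with the definition clauses, preserves satisfiability over $\model{C}$.

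For (i) and (ii) I would simply invoke the corresponding results of \cite{ND07:jaal}: the renaming-based translation is satisfiability-preserving for \system{K}{(n)}{}, and the rules \rulename{IRES1}--\rulename{IRES2}, \rulename{LRES}, \rulename{MRES}, and \rulename{NEC1}--\rulename{NEC3} are sound over the class of all Kripke frames. Since $\model{C}$ is a subclass of that class — it merely imposes the confluence condition on each $\relation{R}{\agent}$ — soundness over all frames immediately entails soundness over $\model{C}$, and no frame-specific reasoning is needed for these rules. The only point to record is that the fresh symbols introduced by renaming are semantically unconstrained, so any model of $\varphi$ in $\model{C}$ extends to a model of its normal form, and conversely the reduct of any model of the normal form is a model of $\varphi$.

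The substantive work is (iii), and this is where I expect the main obstacle. For \res{\agent}{\gp,1,\gr,\gs} I would formalise the derivation already sketched in the text preceding the theorem: from the premise $\universal(l \then \nec{\agent}l')$ pass to the equivalent $\universal(\neg\nec{\agent}l' \then \neg l)$, use the semantics of $\universal$ to infer $\universal(\nec{\agent}^{\gp}(\neg\nec{\agent}l' \then \neg l))$, apply the distribution axiom \axiomname{K}$_{\agent}$ $\gp$ times, and rewrite through the duality $\pos{\agent} \eqdef \neg\nec{\agent}\neg$ to obtain $\universal(\pos{\agent}^{\gp} l \then \pos{\agent}^{\gp}\nec{\agent} l')$. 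The key point is then that the corresponding instance of $\G{\gp,1,\gr,\gs}{\agent}$, namely $\pos{\agent}^{\gp}\nec{\agent} l' \then \nec{\agent}^{\gr}\pos{\agent}^{\gs} l'$, is valid over $\model{C}$ by the correspondence recorded in Table~\ref{table:frames}; chaining it after the implication just obtained yields the rule's conclusion $\universal(\pos{\agent}^{\gp} l \then \nec{\agent}^{\gr}\pos{\agent}^{\gs} l')$. The case \res{\agent}{\gp,0,\gr,\gs} is more direct, since with $\gq=0$ the premise $\universal(C \then \pos{\agent}^{\gp} l')$ already supplies the antecedent of $\G{\gp,0,\gr,\gs}{\agent}$, and validity of that axiom over $\model{C}$ gives the conclusion at once.

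Finally I would check that the definition clauses, encoding $\newpos{a,l} \ifonlyif \neg\nec{a}\neg l$, are satisfiability-preserving: they introduce fresh symbols whose value at each world is fixed by the biconditional, so every model over $\model{C}$ expands uniquely to interpret them without disturbing the truth of the remaining clauses. Because the conclusions of the new rules may contain the compound modalities $\pos{\agent}^{\gp}$ and $\pos{\agent}^{\gs}$, these clauses are precisely what licenses rewriting such conclusions back into \snf{} while staying within $\model{C}$. Assembling (i)--(iii) shows that no clause set satisfiable over $\model{C}$ can derive $\false$, which is the claimed soundness; the detailed verifications are those of \cite{MND13}.
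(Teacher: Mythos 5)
Your proposal is correct and follows essentially the same route as the paper: satisfiability preservation of the \snf{} translation and of the \calculus{K} rules is delegated to \cite{ND07:jaal}, the definition clauses are handled by expanding any model with the forced valuation of the symbols $\newpos{\agent,l}$, and soundness of the new rules is obtained by exactly the derivation the paper itself sketches before the theorem (equivalence of the premise with its contrapositive, the semantics of $\universal$, $\gp$-fold application of \axiomname{K$_\agent$}, duality, and then chaining with the validity of $\G{\gp,\gq,\gr,\gs}{\agent}$ over the corresponding frame class). Your explicit remark that soundness of the \calculus{K} rules over all frames transfers to the restricted class is a small but welcome clarification that the paper leaves implicit.
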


\begin{proof}[Sketch]
The transformation into the normal form is satisfiability preserving \cite{ND07:jaal}. Given a set \set{T} of clauses and a model \model{M} that satisfies \set{T}, we can construct a model \model{M'} for the union of \set{T} and the definition clauses, where \model{M} and \model{M'} may differ only in the valuation of the definition symbols. By setting properly the valuations in \model{M'}, we have that $\modelw{M'}{\st}\models \newpos{\agent,p}$ if and only if $\modelw{M}{\st}\models \pos{\agent}p$, for any $\st \in \Sts$. Soundness of the inference rules for \calculus{K} is also given in \cite{ND07:jaal}. Soundness of \res{\agent}{\gp,1,\gr,\gs} and \res{\agent}{\gp,0,\gr,\gs} follow from the axiomatisation of \system{K}{(n)}{\gp,\gq,\gr,\gs}.
\end{proof}

\setlength{\textfloatsep}{0.1cm}
{\tiny
\begin{table}
\begin{tabular}{|c|l|}\hline
       Logic      &  Inference Rules \\ \hline

\axiomname{T$_{\agent}$}   & {\footnotesize
$
\begin{array}{lrcl}
\mbox{[\res{\agent}{0,0,0,1}]} & \universal(\constant{true} & \then & D \lor l) \\ \cline{2-4}
& \universal(\neg D & \then & \neg\nec{a}\neg l)
\end{array}
$
}\\
                 & {\footnotesize
$
\begin{array}{lrcl}
\mbox{[\res{\agent}{0,1,0,0}]} & \universal(l & \then & \nec{\agent}l') \\  \cline{2-4}
& \universal(\constant{true} & \then & \neg l \vee l')
\end{array}
$
}\\ \hline
\axiomname{Ban$_{\agent}$} & {\footnotesize
$
\begin{array}{lrcl}
\mbox{[\res{\agent}{0,0,1,0}]} & \universal(\constant{true} & \then & D \lor l) \\ \cline{2-4}
& \universal(\neg D & \then & \nec{a} l)
\end{array}
$
}\\
                 & {\footnotesize
$
\begin{array}{lrcl}
\mbox{[\res{\agent}{1,0,0,0}]} & \universal(l & \then & \neg\nec{\agent}\neg l') \\ \cline{2-4}
& \universal(\constant{true} & \then & \neg l \lor  l')
\end{array}
$
}\\ \hline

 \axiomname{B$_{\agent}$}   & {\footnotesize
$
\begin{array}{lrcl}
\mbox{[\res{\agent}{0,0,1,1}]} & \universal(\constant{true} & \then & D \lor l) \\ \cline{2-4}
& \universal(\neg D & \then & \nec{a} \newpos{a,l})
\end{array}
$
}\\
                 & {\footnotesize
$
\begin{array}{lrcl}
\mbox{[\res{\agent}{1,1,0,0}]} & \universal(l & \then & \nec{\agent} l') \\ \cline{2-4}
&  \universal(\neg l' & \then & \nec{\agent} \neg l)
\end{array}
$
}\\ \hline

 \axiomname{D$_{\agent}$}   & {\footnotesize
$
\begin{array}{lrcl}
\mbox{[\res{\agent}{0,1,0,1}]} & \universal(l & \then & \nec{\agent}l') \\ \cline{2-4}
              & \universal(l & \then & \neg \nec{\agent}\neg l') \\     
\end{array}
$
}\\\hline
\end{tabular}\quad
\begin{tabular}{|c|l|}\hline
       Logic      &  Inference Rules \\ \hline
 \axiomname{\G{0,1,1,1}{\agent}} & {\footnotesize
$
\begin{array}{lrcl}
\mbox{[\res{\agent}{0,1,1,1}]} & \universal(l & \then & \nec{\agent} l') \\ \cline{2-4}
             & \universal(l & \then & \nec{\agent}\newpos{a,l'}) \\ 
\end{array}
$
}\\ 
   & {\footnotesize
$
\begin{array}{lrcl}
\mbox{[\res{\agent}{1,1,0,1}]} & \universal(l & \then & \nec{\agent} l') \\ \cline{2-4}
             & \universal(\newpos{\agent,l} & \then & \neg\nec{\agent}\neg l') \\ 
\end{array}
$
}\\ \hline
 \axiomname{F$_{\agent}$}  &{\footnotesize
$
\begin{array}{lrcl}
\mbox{[\res{\agent}{1,0,1,0}]} & \universal(l & \then & \neg\nec{\agent}\neg l') \\ \cline{2-4}
& \universal(l & \then & \nec{\agent}l')
\end{array}
$
}  \\ \hline

 \axiomname{5$_{\agent}$}   &{\footnotesize
$
 \begin{array}{lrcl}
 \mbox{[\res{\agent}{1,0,1,1}]} & \universal(l & \then & \neg \nec{\agent}\neg l') \\ \cline{2-4}
 & \universal(l & \then & \nec{\agent} \newpos{\agent,l'})\\
 \end{array}$
}\\
                 & {\footnotesize
$\begin{array}{lrcl}
\mbox{[\res{\agent}{1,1,1,0}]} &\universal( l  & \then &  \nec{\agent} l')\\ \cline{2-4}
              & \universal(\newpos{\agent,l} & \then & \nec{\agent} l')\\
\end{array}
$
}\\ \hline
 \axiomname{G1$_{\agent}$}  & {\footnotesize
$
\begin{array}{lrcl}
\mbox{[\rulename{\res{\agent}{1,1,1,1}}]} 
              &\universal( l & \then & \nec{\agent} l') \\ \cline{2-4}
              &\universal( \newpos{\agent,l} & \then & \nec{\agent} \newpos{\agent,l'}) \\
          \end{array}
$
}
\\ \hline
\multicolumn{2}{c}{}\\
\multicolumn{2}{c}{}\\
\end{tabular}
\vspace{1ex}\caption{Inference Rules for several instances of \protect{$\G{\gp,\gq,\gr,\gs}{\agent}$}}
\label{table:rules:specific}
\end{table}
}

Table~\ref{table:rules:specific} 
shows the inference rules for each specific instance of $\G{\gp,\gq,\gr,\gs}{\agent}$, where $\gp,\gq,\gr,\gs \in \Set{0,1}$, $l$, $l' \in \set{L}$, and $D$ is a disjunction of literals. As $\G{\gp,\gq,\gr,\gs}{\agent}$ is semantically equivalent to $\G{\gr,\gs,\gp,\gq}{\agent}$, the inference rules for both systems are grouped together. Some of the inference rules in Table~\ref{table:rules:specific} are obtained directly from Table~\ref{table:rules:general}. For instance, the rule for reflexive systems, i.e.\ where the axiom $\G{0,1,0,0}{\agent}$ is valid, has the form $\universal(l \then \nec{\agent} l') / \universal(\pos{\agent}^{0} l  \then \nec{\agent}^{0}\pos{\agent}^{0} l')$ in Table~\ref{table:rules:general}; in Table~\ref{table:rules:specific}, the conclusion is rewritten in its normal form, that is, $\universal(\true \then \neg l \lor l')$. For other systems, the form of the inference rules are slightly different from what would be obtained from a direct application of the general inference rules in Table~\ref{table:rules:general}. This is the case, for instance, for the inference rules for symmetric systems, that is, those systems where the axiom $\G{1,1,0,0}{\agent}$ is valid. From Table~\ref{table:rules:general}, in symmetric systems, for a premise of the form $\universal(l \then \nec{\agent} l')$, the conclusion is given by $\universal(\pos{\agent} l \then l')$, which is translated into the normal form as $\universal(\true \then \neg\newpos{\agent,l} \vee l')$. We have chosen, however, to translate the conclusion as $\universal(\neg l' \then \nec{\agent} \neg l)$, which is semantically equivalent to the conclusion obtained by the general inference rule, but avoids the use of definition symbols.

The inference rules given in Table~\ref{table:rules:general} provide a systematic way of designing the inference rules for each specific modal logic of confluence. We note, however, that we do not always need both inference rules in order to achieve a complete proof method for a particular logic. In the completeness proofs provided in \cite{MND13}, we show for instance that the inference rules which introduce modalities in their conclusions from literal clauses (that is, the inference rules \res{\agent}{0,0,\gr,\gs}) are not needed for completeness. We also show that we need just one specific inference rule for logics in which \axiomname{\G{0,1,1,1}{\agent}} and \axiomname{5$_\agent$} are valid: $\res{\agent}{0,1,1,1}$ and $\res{\agent}{1,0,1,1}$, respectively.

Given a formula $\varphi$ in $\system{K}{(n)}{\gp,\gq,\gr,\gs}$, with $\gp,\gq,\gr,\gs \in \Set{0,1}$, the resolution method for~$\system{K}{(n)}{}$, given in Section~\ref{sec:resolution for K}, and the inference rule $\res{\agent}{\gp,\gq,\gr,\gs}$ are applied to $\tau(\varphi)$ and the set of definition clauses. The extra inference rules for $\system{K}{(n)}{\gp,\gq,\gr,\gs}$ do not need to be applied to clauses if such application generates new nested definition symbols, that is, we do not need definition clauses for definition symbols. For instance, the application of $\rulename{\res{\agent}{1,1,1,1}}$ to a clause of the form $\universal( l \then \nec{\agent} \newpos{a,l'})$ would result in $\universal( \newpos{\agent,l} \then \nec{\agent} \newpos{\agent,\newpos{a,l'}})$. Although it is not incorrect to apply the inference rules to such a clause, this might cause the method not to terminate. We can show, however, that the application of inference rules to clauses which would result in nested literals is not needed for completeness, as the restrictions imposed by those symbols are already ensured by existing definition symbols and relevant inference rules (see Theorem~\ref{theo:termination:confluence} below). This ensures that no new definition symbols are introduced by the proof method. 

Completeness is proved by showing that, for each specific logic of confluence, if a given set of clauses is unsatisfiable, there is a refutation produced by the method presented here. The proof is by induction on the number of nodes of a graph, known as \emph{behaviour graph}, built from a set of clauses. The graph construction is similar to the construction of a canonical model, followed by filtrations based on the set of formulae (or clauses), often used to check completeness for proof methods in modal logics (see \cite{blackburn_p-etal:2001a}, for instance, for definitions and examples). Intuitively, nodes in the graph correspond to states and are defined as maximally consistent sets of literals and modal literals occurring in the set of clauses, including those literals introduced by definition clauses. That is, for any literal $l$ occurring in the set of clauses, including definition clauses, and agents $\agent \in \Agents$, a node contains either $l$ or $\neg l$; and either $\nec{\agent}l$ or $\neg\nec{\agent} l$. The set of edges correspond to the agents' accessibility relations. Edges or nodes that do not satisfy the set of clauses are deleted from the graph. Such deletions correspond to applications of one or more of the inference rules. We prove that an empty behaviour graph corresponds to an unsatisfiable set of clauses and that, in this case, there is a refutation using the inference rules for \calculus{K}, given in Section~\ref{sec:resolution for K}, and the inference rules for the specific logic of confluence, presented in Table~\ref{table:rules:specific}.

\begin{theorem}\label{theo:completeness:confluence} Let \set{T} be an unsatisfiable set of clauses in \G{\gp,\gq,\gr,\gs}{a}, with $\gp,\gq,\gr,\gs \in \Set{0,1}$. A contradiction can be derived by applying the resolution rules for \calculus{K}, presented in Section~\ref{sec:resolution for K}, and Table~\ref{table:rules:specific}.
\end{theorem}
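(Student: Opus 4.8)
The plan is to prove completeness by constructing, from an unsatisfiable set $\set{T}$ of clauses, a finite \emph{behaviour graph} $\set{G}$ whose nodes are the maximally consistent sets of (modal) literals occurring in $\set{T}$ together with the definition clauses, and whose edges encode candidate $\relation{R}{\agent}$-transitions. First I would make precise how the frame condition for $\G{\gp,\gq,\gr,\gs}{\agent}$ (as recorded in Table~\ref{table:frames}) is enforced on the graph: an edge or node that witnesses a violation of the relevant confluence condition is marked for deletion, and each such deletion is matched with a concrete application of an inference rule from Table~\ref{table:rules:specific} (together with the $\calculus{K}$ rules of Section~\ref{sec:resolution for K}) that derives the clause certifying the inconsistency. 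The central claim, proved by induction on the number of nodes of $\set{G}$, is: \emph{$\set{G}$ is empty if and only if $\set{T}$ is unsatisfiable, and whenever a node or edge is deleted there is a corresponding resolution derivation.} Since $\set{T}$ is unsatisfiable, no model of $\set{T}$ exists, so the fully reduced graph must be empty, which by the induction hypothesis yields a refutation ending in $\universal(\start\then\false)$ or $\universal(\true\then\false)$.

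The key steps, in order, are as follows. I would first establish the \emph{base case}: a single-node graph in which the node is already contradictory (contains $l$ and $\neg l$, or contains $\nec{\agent}l$ and $\neg\nec{\agent}l$ forcing an empty-successor contradiction) corresponds directly to an application of \rulename{LRES} or \rulename{MRES} producing $\universal(\true\then\false)$. Next, the \emph{inductive step} splits according to the reason a node is deleted. A node with no admissible $\relation{R}{\agent}$-successor respecting its modal literals is handled by the \rulename{NEC1}--\rulename{NEC3} rules, exactly as in the $\calculus{K}$ completeness argument of \cite{ND07:jaal}; a node deleted because it violates the confluence condition of the logic is handled by the matching rule $\res{\agent}{\gp,\gq,\gr,\gs}$, whose soundness (Theorem~\ref{theo:soundness:confluence}) guarantees that the newly added clause is a legitimate consequence and whose \emph{conclusion} is precisely what forces the required successor or closure property on the frame. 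I would then argue, using the definition clauses for $\newpos{\agent,l}$, that every occurrence of a $\pos{\agent}^{\gs}$-modality needed in a rule conclusion is already named by a definition symbol, so no new symbols arise and the graph stays finite; this is where I would invoke the observation, established in the termination discussion and \cite{MND13}, that applying the extra rules to clauses containing definition symbols (which would produce nested symbols) is unnecessary for completeness.

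I expect the main obstacle to be the \emph{modularity} of the argument across the eight families: I must verify that for each logic the \emph{particular} form of its inference rule (e.g.\ the choice in Table~\ref{table:rules:specific} to render the symmetric-rule conclusion as $\universal(\neg l' \then \nec{\agent}\neg l)$ rather than via $\newpos{\agent,l}$) still suffices to perform exactly the deletions the frame condition demands, and that the cases where only one of the two general rules is retained (the claims that $\res{\agent}{0,0,\gr,\gs}$ is dispensable, and that a single rule suffices for $\axiomname{\G{0,1,1,1}{\agent}}$ and $\axiomname{5}_{\agent}$) do not leave any violating configuration unaddressed. Concretely, the delicate point is showing that the conclusion clauses added by $\res{\agent}{\gp,\gq,\gr,\gs}$, once renamed into \snf{} via the definition clauses, interact correctly with the \rulename{NEC} rules so that the inconsistency is \emph{propagated back} to the node being deleted rather than merely being exposed at a successor. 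The strategy to overcome this is to prove a uniform \emph{lifting lemma}: whenever the frame condition associated with $\G{\gp,\gq,\gr,\gs}{\agent}$ fails at a node, the combination of the relevant rule from Table~\ref{table:rules:specific} with \rulename{NEC1}--\rulename{NEC3} and literal resolution derives a literal clause eliminating that node, and then to check this lemma family-by-family against Table~\ref{table:frames}, deferring the routine bookkeeping to the full proof in \cite{MND13}.
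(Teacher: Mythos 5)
Your proposal follows essentially the same route as the paper: construct a behaviour graph whose nodes are maximally consistent sets of literals and modal literals (including definition symbols), argue by induction on the number of nodes, match each deletion of a node or edge violating the frame condition to an application of the corresponding rule from Table~\ref{table:rules:specific} together with the \calculus{K} rules, and conclude that an empty graph yields a refutation. The paper's own proof is only a sketch deferring the details to \cite{MND13}, and your elaboration (base case, the role of definition clauses in keeping the graph finite, and the family-by-family check of the specialised rules) is a faithful expansion of that same strategy.
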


\begin{proof}[Sketch] We construct a behaviour graph and show that the application of rules in Table~\ref{table:rules:specific} removes nodes and edges where the corresponding frame condition does not hold. The full proof is provided in \cite{MND13}.
\end{proof}

\begin{theorem}\label{theo:termination:confluence} The resolution-based calculi for logics of confluence terminate.
\end{theorem}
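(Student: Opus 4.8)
The plan is to prove termination by a finiteness-plus-monotonicity argument: I want to exhibit a single finite set of clauses that contains every clause occurring in any derivation, and then observe that a derivation is a non-decreasing chain inside this finite set, so it must stabilise after finitely many steps. First I would fix, once and for all, the propositional signature over which clauses are written. By the normal-form construction of Section~\ref{section-modal-snf}, $\tau(\varphi)$ contains only finitely many propositional symbols. The definition clauses add, for each literal $l \in \set{L}(\tau(\varphi))$ and each agent $\agent \in \Agents$, one fresh definition symbol $\newpos{\agent,l}$; since $\set{L}(\tau(\varphi))$ and $\Agents$ are both finite, only finitely many such symbols arise, and (as already noted before the statement) they can all be introduced at the outset. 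Writing $\Sigma$ for the resulting finite symbol set, the key claim I must establish is that no inference step ever introduces a symbol outside $\Sigma$.

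Next I would check this no-new-symbol invariant rule by rule. The rules of \calculus{K} (Section~\ref{section-k-rules}) build their resolvents solely from literals already present in the premises, so they introduce nothing new. Each confluence rule of Table~\ref{table:rules:specific} either yields a purely propositional or modal clause over the premise literals $l,l'$ (for instance \res{\agent}{0,1,0,0}, \res{\agent}{1,1,0,0}, \res{\agent}{1,1,1,0}), or yields a clause whose only non-original symbol is $\newpos{\agent,l}$ or $\newpos{\agent,l'}$, and these already belong to $\Sigma$ whenever $l,l' \in \set{L}(\tau(\varphi))$. So for these applications the invariant is immediate.

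The real obstacle is the single way the signature could grow, namely nested definition symbols. As observed before the statement, applying a rule such as \res{\agent}{1,1,1,1} to a clause whose right-hand side is itself a definition literal, e.g.\ $\universal(l \then \nec{\agent}\newpos{a,l'})$, produces $\universal(\newpos{\agent,l} \then \nec{\agent}\newpos{\agent,\newpos{a,l'}})$, and iterating would build an unbounded tower of nested $\newpos{}{}$ symbols, breaking the finiteness of $\Sigma$. The plan is therefore to impose the restriction that the confluence rules are simply not fired when their conclusion would contain a nested definition symbol, and then to argue that this restriction costs nothing. This is the crux: by the completeness argument (Theorem~\ref{theo:completeness:confluence}, with the full proof in \cite{MND13}), the frame condition that such a blocked application would enforce is already guaranteed by the definition clauses of the existing symbols together with the permitted, unrestricted applications of the same rule. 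Discharging this completeness-based justification is where the work lies; everything else is bookkeeping.

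Finally, granting the invariant that every derived clause is a normal-form clause over the fixed finite signature $\Sigma$, I would count. Up to the standard simplification of Section~\ref{section-k-rules} — so that the right-hand sides of initial and literal clauses are treated as sets of literals — there are only finitely many distinct initial clauses, literal clauses, and positive and negative modal $\agent$-clauses expressible over $\Sigma$ and the finite set $\Agents$. Hence the set of all expressible clauses is finite. A derivation $\set{T}_0, \set{T}_1, \ldots$ only ever adds resolvents, so the $\set{T}_i$ form a non-decreasing chain bounded above by this finite set; the chain must therefore become constant after finitely many steps, at which point either a contradiction of the form $\universal(\start \then \false)$ or $\universal(\true \then \false)$ has been derived, or no further clause can be generated. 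In either case the derivation terminates, which is exactly the assertion of the theorem.
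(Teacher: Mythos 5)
Your argument is correct and follows essentially the same route as the paper's own proof sketch: both hinge on the observation, justified by appeal to the completeness proof, that rule applications which would produce nested definition symbols such as \newpos{\agent,\newpos{\agent,l}} can be blocked without loss, so that the signature stays finite, only finitely many clauses are expressible modulo simplification, and any monotone derivation must stabilise. You spell out the bookkeeping (fixing the signature, checking the no-new-symbol invariant rule by rule, and the final counting step) in more detail than the paper's terse sketch, but the decomposition and the key completeness-based justification are the same.
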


\begin{proof}[Sketch] From the completeness proof, the introduction of a literal such as \newpos{\agent,\newpos{\agent,l}} for an agent $\agent$ and literal $l$ is not needed. We can show that the restrictions imposed by such clauses, together with the resolution rules for each specific logical system, are enough to ensure that the corresponding frame condition already holds. As the proof method does not introduce new literals in the clause set, there is only a finite number of clauses that can be expressed. Therefore, the proof method is terminating.
\end{proof}

\begin{example} We show that $\varphi \eqdef p \then \nec{1}\pos{1}p$, which is an instance of \axiomname{B$_1$}, is a valid formula in symmetric systems. As symmetry is implied by reflexivity and Euclideanness, instead of using \res{1}{1,1,0,0}, we combine the inference rules for both \axiomname{T$_1$} and \axiomname{5$_1$}. Clauses (1)--(4) correspond to the translation of the negation of $\varphi$ into the normal form. Clauses (5)--(8) are the definition clauses used in the proof.

{\small
\hspace{-4ex}$
\begin{array}{rrclll}
1. & \start & \then & t_0 \\
2. & \true  & \then & \neg t_0 \lor p \\
3. & t_0 & \then & \neg \nec{1}\neg t_1 \\
4. & t_1 & \then & \nec{1} \neg p \\
5. & \neg \newpos{1,t_1} & \then & \nec{1} \neg t_1 & \jusdef{Def.\ \mbox{$\newpos{1,t_1}$}}\\
6. & \newpos{1,t_1} & \then & \neg\nec{1}\neg t_1 & \jusdef{Def.\ \mbox{$\newpos{1,t_1}$}}\\
7. & \newpos{1,p} & \then & \neg \nec{1} \neg p & \jusdef{Def.\ \mbox{$\newpos{1,p}$}}\\
8. &  \neg \newpos{1,p} & \then &\nec{1} \neg p & \jusdef{Def.\ \mbox{$\newpos{1,p}$}}\\
\\
\end{array}
$\quad
$
\begin{array}{rrclll}
9. & \true & \then &  \neg t_0 \lor \newpos{1,t_1}  & \jus{MRES}{5,3}\\
10. & \true & \then & \neg t_1 \lor \neg \newpos{1,p} & \jus{MRES}{7,4} \\
11. & \newpos{1,p} & \then & \nec{1} \newpos{1,p} & \jus{\res{1}{1,0,1,1}}{7}\\
12. & \true & \then & \neg \newpos{1,p} \lor \neg \newpos{1,t_1} & \jus{NEC1}{11,6,10} \\
13. & \true & \then & \neg p \lor \newpos{1,p} & \jus{\res{1}{0,1,0,0}}{8} \\
14. & \true & \then & \neg p \lor \neg \newpos{1,t_1} & \jus{LRES}{13,12}\\
15. & \true & \then & \neg t_0 \lor \neg p & \jus{LRES}{14,9} \\
16. & \true & \then & \neg t_0 & \jus{LRES}{15,2} \\
17. & \start & \then & \false & \jus{IRES1}{16,1}\\
\end{array}
$
}

\noindent Clause (11) results from applying the Euclidean inference rule to clause (7). Clause (13) results from applying the reflexive inference rule to (8). The remaining clauses are derived by the resolution calculus for \system{K}{(1)}{}. As a contradiction is found, given by clause (17), the set of clauses is unsatisfiable and the original formula $\varphi$ is valid.
\end{example}

\section{Closing Remarks}

We have presented a sound, complete, and terminating proof method for logics of confluence, that is, normal multimodal systems where axioms of the form 

\vspace{-1ex}\begin{center}
$\G{\gp,\gq,\gr,\gs}{\agent}=\pos{\agent}^\gp\nec{\agent}^\gq \varphi \then \nec{\agent}^\gr\pos{\agent}^\gs\varphi$
\end{center}

\vspace{-1ex}\noindent where $\gp,\gq,\gr,\gs \in \Set{0,1}$, are valid. The axioms \G{\gp,\gq,\gr,\gs}{a} provide a general form for axioms widely used in logical formalisms applied to representation and reasoning within Computer Science. 

We have proved completeness of the proof method presented in this paper for eight families of logics and their fusions. The inference rules for particular instances of these logics can be systematically obtained and the resulting calculus can be implemented by adding to the existing prover for \system{K}{(n)}{} \cite{George2013} the clauses dependent on the clause-set. Efficiency, of course, depends on several aspects. Firstly, for certain classes of problems, dedicated proof methods might be more efficient. For instance, if the satisfiability problem for a particular logic is in {\sf NP} (as in the case of \system{S5}{(1)}{}), then our procedure may be less efficient as the satisfiability problem for~\system{K}{(1)}{} is already {\sf PSPACE}-complete \cite{Ladner77}. Secondly, efficiency might depend on the inference rules chosen to produce proofs for a specific logic. For instance, for \system{S5}{(n)}{}, the user can choose the inference rules related to reflexivity and Euclideanness, or choose the inference rules related to seriality, symmetry, and Euclideanness. The number of inference rules used to test the unsatisfiability of a set of clauses for a particular logic might affect the number of clauses generated by the resolution method as well as the size of the proof. As in the case of derived inference rules in other proof methods, using more inference rules might lead to shorter proofs. Thirdly, as in the case of the resolution-based method for propositional logic, efficiency might be affected by strategies used to search for a proof. Future work includes the design of strategies for \calculus{K}{(n)}{} and for specific logics of confluence. Fourthly, efficiency might also depend on the form of the input problem. For instance, comparisons between tableaux methods and resolution methods \cite{HustadtSchmidt02b,GTW2011} have shown that there is no overall better approach: for some problems resolution proof methods behave better, for others tableaux based methods behave better. Providing a resolution-based method for the logics axiomatised by \axiomname{K$_\agent$} and \G{\gp,\gq,\gr,\gs}{\agent} gives the user a choice for automated tools that can be used depending on the type of the input formulae.

There are quite a few dedicated methods for the logics presented in this paper. In general, however, those methods do not provide a systematic way of dealing with logics based on similar axioms or their extensions. Therefore, we restrict attention here to methods related to logics of confluence. Tableaux methods for logics of confluence where the mono-modal axioms \axiomname{T}, \axiomname{D}, \axiomname{B}, \axiomname{4}, \axiomname{5}, \axiomname{De} (for density, the converse of \axiomname{4}), and \axiomname{G} are valid, can be found in \cite{CastilhoCGH97,CerroG99}. For each of those axioms, a tableau inference rule is given. The inference rules can then be combined in order to provide proof methods for modal logics under \system{S5}{(1)}{}. Whilst the tableaux procedures in \cite{CastilhoCGH97,CerroG99} are designed for mono-modal logics they seem to be extandable to multimodal logics as long as there are no interactions between modalities. Those procedures do not cover all the logics investigated in this paper. In \cite{BasinMV97}, labelled tableaux are given for the mono-modal logics axiomatised by \axiomname{K} and axioms \G{\gp,\gq,\gr,\gs}{} where $\gq=\gs=0$ implies $\gp=\gr=0$. This restriction avoids the introduction of inference rules related to the identity predicate, but also excludes, for instance, functional and modally banal systems, which are treated by the method introduced in the present paper. In \cite{Blackburn02beyondpure}, hybrid logic tableaux methods for logics of confluence are given: the inference rules create nodes, labelled by nominals. The nominals are used in order to eliminate the Skolem function related to the existential quantifier in the first-order sentence corresponding to the axiom \G{\gp,\gq,\gr,\gs}{\agent}. This proof method provides tableau rules for all instances of the axiom. Soundness and completeness are discussed, but termination of the method is not dealt with and it is not clear what are the bounds for creating new nodes in the general case. In \cite{GPL:2011:display:tense}, sound, complete, and terminating display calculi for tense logics and some of its extensions, including those with the axiom  \G{\gp,\gq,\gr,\gs}{\agent}, are presented. It has been shown that these calculi have the property of \emph{separation}, that is, they provide complete proof methods for the component fragments. The paper investigates the relation between the display calculi and deep inference systems (where the sequent rules can be applied at any node of a proof tree). By finding appropriate propagation rules for the fusion of tense logic with either \system{S4}{(1)}{}, \system{S5}{(1)}{}, or functional systems, completeness of search strategies are presented. However, propagation rules for the axiom of convergence, $\axiomname{G1}$, or for the combination of path axioms (i.e.\ axioms of the form $\pos{}^i\varphi \then \pos{}^j \varphi$) with seriality are not given. Also related, in \cite{Baldoni98atableau}, prefixed tableaux procedures for confluence logics that validate the multimodal version of the axiom $\pos{a}\nec{b} \varphi \then \nec{c}\pos{d}\varphi$, where $\varphi$ is a formula, are given. Note that the logics in \cite{Baldoni98atableau} are systems with instances of the axiom \axiomname{\G{1,1,1,1}{a,b,c,d}}, that is, a logic which allows the interaction of the agents $a,b,c,d \in \Agents$, and might lead to undecidable systems.

To the best of our knowledge, there are no resolution-based proof methods for logics of confluence. However, resolution-based methods for modal logics, based on translation into first-order logic, have been proposed for several modal logics. A survey on translation-based approaches for non-transitive modal logics (i.e.\ modal logics that do not include the axiom \axiomname{4}) can be found in \cite{Nivelle00resolution-basedmethods:2000}. The translation-based approach has the clear advantage of being easily implemented, making use of well-established theorem-provers, and dealing with any logic that can be embedded into first-order, should it be decidable or not. However, first-order provers cannot deal easily with logics that embed some properties which are covered by particular axioms of confluence (e.g. functionality). In order to avoid such problematic fragments within first-order logic, the axiomatic translation principle for modal logic, introduced in \cite{SchmidtHustadt07a}, besides using the standard translation of a modal formulae into first-order, takes an axiomatisation for a particular modal logic and introduces a set of first-order modal axioms in the form of \emph{schema clauses}. As an example, adapted from \cite{SchmidtHustadt07a}, in order to prove that $\nec{\agent}\neg\nec{\agent}p$ is satisfiable in \system{KT4}{(n)}{}, for each modal subformula (i.e.\ $\nec{\agent}\neg\nec{\agent}p$ and $\nec{\agent}p$) and for each considered axiom (i.e.\ \axiomname{T} and \axiomname{4}), one schema clause is added, resulting in:

\vspace{-.5ex}{\small\begin{center}
$
\begin{array}{c}
\neg Q_{\nec{\agent}\neg\nec{\agent}p}(x) \lor \neg R(x,y) \lor Q_{\nec{\agent}\neg\nec{\agent}p}(y) \\
\neg Q_{\nec{\agent}p}(x) \lor \neg R(x,y) \lor Q_{\nec{\agent}p}(y)
\end{array}
$\quad\quad
$
\begin{array}{c}
\neg Q_{\nec{\agent}\neg\nec{\agent}p}(x) \lor Q_{\neg\nec{\agent}p}(y) \\
\neg Q_{\nec{\agent}p}(x) \lor Q_{p}(y)
\end{array}
$
\end{center}}

\vspace{-.3ex}\noindent where the predicate $Q_{\varphi}(x)$ can be read as $\varphi$ holds at world $x$ and $R$ is the predicate symbol to express the accessibility relation for agent $\agent$. Note that the clauses on the left are related to transitivity (\axiomname{4}) and the two clauses on the right are related to reflexivity (\axiomname{T}). The axiomatic translation approach is similar to the approach taken in the present paper and in \cite{ND07:jaal} as the schema clauses provide a way of talking about properties of the accessibility relation. As in our case, soundness follows easily from the properties of the translation. Termination follows from the fact that only a finite number of schema clauses are needed. However, as in the case of the proof method presented here, general completeness of the method is difficult to be proved and it is given only for particular families of logics. In \cite{DN2005}, a translation-based approach for properties which can be expressed by regular grammar logics (including transitivity and Euclideaness) is given. Completeness of the method has been proved for some families of logics.

In the present paper, we have restricted attention to the case where $\gp,\gq,\gr,\gs \in \Set{0,1}$, but we believe that the proof method can be extended in a uniform way for dealing with the unsatisfiability problem for any values of $\gp,\gq,\gr$, and $\gs$, by adding inference rules of the following form:

\vspace{-.5ex}{\small\begin{center}
$
\begin{array}{lrcl}
{[\res{\agent}{\gp,\gq,\gr,\gs}]} & \universal(l & \then & \pos{\agent}^{\,\gp} \nec{\agent}^{\gr} l')\\ \cline{2-4}
                      & \universal(l & \then & \nec{\agent}^\gr\pos{\agent}^{\gs} l')
\end{array}
$
\end{center}}

\vspace{-.5ex}\noindent which requires search for clauses that correspond to the normal form of the premise and the introduction of as many new definition symbols as the number of modalities occurring in the conclusion. The inference rule \res{\agent}{\gp,\gq,\gr,\gs} is obviously sound, but we have yet to identify the restrictions on the number of new propositional symbols introduced by the method in order to ensure termination. Future work includes this extension, the complexity analysis, the implementation of the proof method, and practical comparisons with other methods.

\bibliographystyle{abbrv}
\bibliography{confluence,mtp}

\pagebreak


\end{document}